\crefname{theorem}{Theorem}{Theorems}
\crefname{lemma}{Lemma}{Lemmas}
\crefname{claim}{Claim}{Claims}
\crefname{corollary}{Corollary}{Corollaries}
\crefname{remark}{Remark}{Remarks}
\crefname{observation}{Observation}{Observations}
\crefname{hypothesis}{Hypothesis}{Hypotheses}
\crefname{definition}{Definition}{Definitions}
\crefname{problem}{Problem}{Problems}
\crefname{appendix}{Appendix}{Appendices}
\crefname{section}{Section}{Sections}
\crefname{equation}{Eq.}{Eqs.}
\crefname{figure}{Figure}{Figures}
\crefname{table}{Table}{Tables}
\renewcommand{\geq}{\geqslant}
\renewcommand{\leq}{\leqslant}
\renewcommand{\phi}{\varphi}
\renewcommand{\epsilon}{\varepsilon}
\renewcommand{\bar}{\overline}
\newcommand{\prb}[1]{\textsc{#1}\xspace}
\renewcommand{\vec}[1]{\mathbf{\bm{#1}}}
\newcommand{\isep}{\mathrel{..}\nobreak}
\newcommand{\reco}{\leftrightsquigarrow}
\DeclareMathOperator{\val}{\mathsf{val}}
\DeclareMathOperator{\bigO}{\mathcal{O}}
\let\polylog\relax\DeclareMathOperator*{\polylog}{\mathrm{polylog}}
\newcommand{\ini}{\mathsf{ini}}
\newcommand{\tar}{\mathsf{tar}}
\newcommand{\ttt}{\mathsf{t}}
\newcommand{\TTT}{\mathsf{T}}
\newcommand{\asg}{\psi}
\newcommand{\sqasg}{\Psi}
\algnewcommand\And{\; \textbf{and} \;}
\algnewcommand\Or{\; \textbf{or} \;}
\algnewcommand\To{\; \textbf{to} \;}
\algnewcommand\Continue{\textbf{continue}}
\algnewcommand\Not{\textbf{not}}
\algnewcommand{\algalign}[1]{\parbox[t]{\dimexpr\linewidth-\algorithmicindent}{#1\strut}}
\algrenewcommand\textproc{\textsl}
\newcommand{\bbN}{\mathbb{N}}
\numberwithin{equation}{section}
\newtheorem{theorem}{Theorem}[section]
\newtheorem{lemma}[theorem]{Lemma}
\newtheorem{remark}[theorem]{Remark}
\newtheorem{claim}[theorem]{Claim}
\newtheorem{observation}[theorem]{Observation}
\theoremstyle{definition}
\newtheorem{definition}[theorem]{Definition}
\title{On Approximate Reconfigurability of Label Cover}
\author{Naoto Ohsaka\thanks{CyberAgent, Inc., Tokyo, Japan. \href{mailto:ohsaka\_naoto@cyberagent.co.jp}{\texttt{ohsaka\_naoto@cyberagent.co.jp}}; \href{mailto:naoto.ohsaka@gmail.com}{\texttt{naoto.ohsaka@gmail.com}}
}}
\date{\today}
\begin{document}
\maketitle
\begin{abstract}Given a two-prover game $G$ and its two satisfying labelings $\asg_\ini$ and $\asg_\tar$,
the \prb{Label Cover Reconfiguration} problem asks whether
$\asg_\ini$ can be transformed into $\asg_\tar$
by repeatedly changing the label of a single vertex while preserving any intermediate labeling satisfying $G$.
We consider its optimization version by relaxing the feasibility of labelings,
referred to as \prb{Maxmin Label Cover Reconfiguration}:
We are allowed to pass through any \emph{non-satisfying} labelings,
but required to maximize the ``soundness error,'' which is defined as the  \emph{minimum} fraction of satisfied edges during transformation from $\asg_\ini$ to $\asg_\tar$.
Since the parallel repetition theorem of {Raz} (SIAM~J.~Comput.,~1998)~\cite{raz1998parallel},
which implies $\NP$-hardness of approximating \prb{Label Cover} within any constant factor, 
gives strong inapproximability results for many $\NP$-hard problems,
one may think of
using \prb{Maxmin Label Cover Reconfiguration} to derive inapproximability results for reconfiguration problems.
We prove the following results on \prb{Maxmin Label Cover Reconfiguration},
which display different trends from those of
\prb{Label Cover} and the parallel repetition theorem:
\begin{itemize}
    \item \prb{Maxmin Label Cover Reconfiguration}
        can be approximated within a factor of $\frac{1}{4} - o(1)$ for some restricted graph classes, including
        biregular graphs,
        balanced bipartite graphs with no isolated vertices, and
        superconstant average degree graphs.
    
    \item A ``naive'' parallel repetition of \prb{Maxmin Label Cover Reconfiguration}
        does not decrease the soundness error for \emph{every} two-prover game.
    
    \item \prb{Label Cover Reconfiguration} on \emph{projection games} can be decided in polynomial time.
\end{itemize}
Our results suggest that
a reconfiguration analogue of the parallel repetition theorem is unlikely.
\end{abstract}

\section{Introduction}
\subsection{Background}

In \emph{reconfiguration problems} \cite{ito2011complexity},
given a pair of feasible solutions for a combinatorial problem,
we seek for a step-by-step transformation from one to the other
while maintaining the feasibility of every intermediate solution.
Under the framework of reconfiguration due to \citet*{ito2011complexity},
numerous reconfiguration problems have been derived from classical combinatorial problems;
we refer the reader to the surveys by
\citet{nishimura2018introduction} and \citet{heuvel13complexity}.

In this article, we consider
reconfigurability of \prb{Label Cover} \cite{arora1997hardness} and its approximation.
Given a two-prover game $G$ and its two satisfying labelings $\asg_\ini$ and $\asg_\tar$,
the \prb{Label Cover Reconfiguration} problem asks whether
$\asg_\ini$ can be transformed into $\asg_\tar$
by repeatedly changing the label of a single vertex while preserving any intermediate labeling satisfying $G$.
Consider further the following \emph{optimization version} --- referred
to as \prb{Maxmin Label Cover Reconfiguration} --- by relaxing the feasibility of labelings:
We are allowed to pass through any \emph{non-satisfying} labelings,
but required to maximize the ``soundness error,''
which is defined as the \emph{minimum} fraction of satisfied edges during transformation from $\asg_\ini$ to $\asg_\tar$.
Solving it approximately,
we may find approximate reconfiguration for \prb{Label Cover Reconfiguration} consisting of \emph{almost-satisfying} labelings.

The recent work of the author~\cite{ohsaka2023gap} postulates
the \emph{Reconfiguration Inapproximability Hypothesis}\footnote{
This hypothesis has been resolved recently \cite{hirahara2024probabilistically,karthik2023inapproximability}. See also \cref{subsec:intro:followup}.
}
to give evidence that a bunch of reconfiguration problems such as \prb{Maxmin Label Cover Reconfiguration} are $\PSPACE$-hard to approximate within a factor of, say, $1-\epsilon$ for some $\epsilon \in (0,1)$.
One limitation of this approach is that
the value of such $\epsilon$ might be too small to
refute a $0.999\cdots 9$-approximation algorithm.
In the $\NP$ regime,
we can use the \emph{parallel repetition theorem} due to \citet{raz1998parallel}
to reduce the soundness error of a two-prover game efficiently and
give strong inapproximability results for many $\NP$-hard problems
\cite{feige1998threshold,hastad1999clique,hastad2001some,khanna2000hardness,bellare1998free,dinur2005hardness}.
The crux of this theorem is to imply
along with the PCP theorem~\cite{arora1998probabilistic,arora1998proof}
that 
for every $\epsilon \in (0,1)$, there is a finite alphabet $\Sigma$ such that
\prb{Label Cover} on $\Sigma$ is $\NP$-hard to approximate within a factor of $\epsilon$.\footnote{
The same result holds even if the underlying graph of a two-prover game is biregular.
}
Other parallel repetition theorems
are also known for 
quantum games \cite{yuen2016parallel},
$3$-player games \cite{girish2022polynomial}, and
probabilistically checkable proofs \cite{chiesa2024parallel}.
One might thus think of
a reconfiguration analogue of the parallel repetition theorem for \prb{Maxmin Label Cover Reconfiguration},
which would help in improving $\PSPACE$-hardness of approximation for reconfiguration problems.
Our contribution is to give evidence that such hopes are probably dashed.

\subsection{Our Results}
We present the following results on \prb{Maxmin Label Cover Reconfiguration},
which display different trends from those for \prb{Label Cover} and the parallel repetition theorem.
\begin{itemize}
    \item In \cref{sec:approx}, we prove that
        \prb{Maxmin Label Cover Reconfiguration}
        can be approximated within a factor of $\frac{1}{4}-o(1)$,
        \emph{independent of} the alphabet size $|\Sigma|$,
        for some restricted graph classes, including 
        biregular graphs,
        balanced bipartite graphs with no isolated vertices, and
        superconstant average degree graphs.
        Here, a bipartite graph $G = (X,Y,E)$
        is \emph{biregular}
        if every vertex on the same part has the same degree, and
        is \emph{balanced}
        if $X$ and $Y$ have the same size.
        We stress that \emph{biregular graphs} naturally arise by a standard reduction from 
        probabilistically checkable proofs to
        two-prover games (i.e., \prb{Label Cover} instances), see \citet{fortnow1994power} and \citet[Proposition~6.2]{radhakrishnan2007dinur};
        moreover, parallel repetition preserves biregularity.
        By the follow-up work \cite{karthik2023inapproximability},
        a $\frac{1}{4}$-factor was found to be not optimal;
        still, the present study is the first to show that
        \emph{\prb{Maxmin Label Cover Reconfiguration} can be approximated within a universal constant factor for every alphabet $\Sigma$ unlike \prb{Label Cover}}.
   
    \item In \cref{sec:pr}, we further show that
        a ``naive'' parallel repetition of \prb{Maxmin Label Cover Reconfiguration} does not decrease
        the soundness error for \emph{every} two-prover game.
        In particular,
        applying parallel repetition to (a gap version of) \prb{Maxmin Label Cover Reconfiguration} fails to amplify the gap between completeness and soundness.
    
    \item In \cref{sec:proj}, we develop
        a polynomial-time algorithm for deciding
        \prb{Label Cover Reconfiguration} on \emph{projection games}.
        The proof is based on a simple characterization of the reconfigurability between a pair of satisfying labelings.
        Note that
        if a \prb{Label Cover} instance is a projection game, then so is its parallel repetition,
        which has been a useful property in hardness of approximation \cite{lund1994hardness,hastad2001some}.
        Our polynomial-time solvability means that
        existing reductions relying on the projection property may not be reused or adapted
        in the reconfiguration regime.
\end{itemize}
Our results suggest that
a reconfiguration analogue of the parallel repetition theorem is unlikely; thus,
we should resort to a different approach
to derive an improved inapproximability factor for reconfiguration problems.

\subsection{Related Work}
The \prb{Label Cover} problem was first introduced by \citet{arora1997hardness}.
There are maximization and minimization versions of \prb{Label Cover}, and
both of them cannot be approximated within a factor of $2^{(\log n)^{1-\epsilon}}$
under $\NP \not\subseteq \DTIME(n^{\polylog(n)})$, where
$\epsilon \in (0,1)$ is any small constant and $n$ is the number of vertices.
\citet{dinur2004hardness} proved that
a minimization version of \prb{Label Cover} is $\NP$-hard to approximate
within a factor of $2^{(\log n)^{1-o(1)}}$,
improving upon \cite{arora1997hardness}.
\citet{kortsarz2001hardness}
formulated slightly different variants called \prb{Max Rep} and \prb{Min Rep},
which exhibit similar hardness and approximability results.
Our \prb{Maxmin Label Cover Reconfiguration} problem can also be thought of as
a reconfiguration analogue of \prb{Max Rep}.
Several approximation algorithms for \prb{Label Cover} have been developed
\cite{elkin2007hardness,peleg2007approximation,charikar2011improved,manurangsi2017improved}.
The current best approximation factor for the maximization version is $\bigO(n^{0.233})$ due to \citet{chlamtavc2017approximation} (to the best of our knowledge).

\subsection{Follow-up Work}
\label{subsec:intro:followup}
After an early draft of this paper was uploaded on arXiv, 
the following progress has been made.
The author demonstrated in \cite{ohsaka2024gap} that 
\prb{Maxmin Label Cover Reconfiguration} is $\NP$-hard to approximate within a factor of $\frac{3}{4}+\epsilon$ and
its slight generalization called \prb{Maxmin 2-CSP Reconfiguration}
is $\PSPACE$-hard to approximate within a factor of $0.9942$.
The proof of the latter adapts \citeauthor{dinur2007pcp}'s \emph{gap amplification} \cite{dinur2007pcp}.
Subsequently,
\citet{karthik2023inapproximability} proved matching lower and upper bounds
improving upon the present study significantly; i.e.,
$\NP$-hardness of $(\frac{1}{2}+\epsilon)$-factor approximation and
the existence of a $(\frac{1}{2}-\epsilon)$-factor approximation algorithm
for every $\epsilon > 0$.
Very recently,
\citet{hirahara2024probabilistically,karthik2023inapproximability}
independently gave the proof of the Reconfiguration Inapproximability Hypothesis \cite{ohsaka2023gap}
(thereby resolving an open problem of \citet{ito2011complexity}); thus,
\prb{Maxmin Label Cover Reconfiguration} is $\PSPACE$-hard to approximate \emph{unconditionally}.
\citet{hirahara2024optimal} took a different approach (e.g., FGLSS reduction \cite{feige1996interactive})
from parallel repetition to show that
a minimization version of \prb{Label Cover Reconfiguration} and
\prb{Minmax Set Cover Reconfiguration} are $\PSPACE$-hard to approximate within a factor of $2-o(1)$,
which matches an upper bound of \cite{ito2011complexity}.

\subsection{Preliminaries}
For two integers $m,n \in \bbN$ with $m \leq n$,
let $ [n] \coloneq \{1, 2, \ldots, n\} $ and
$[m\isep n] \coloneq \{m,m+1,\ldots, n-1,n\}$.
For a statement $P$, $\llbracket P \rrbracket$ is $1$ if $P$ is true, and $0$ otherwise.
We formally define \prb{Label Cover Reconfiguration} and its optimization version.
A \emph{constraint graph} is defined as a tuple $G=(V,E,\Sigma,\Pi)$, where
\begin{itemize}
    \item $(V,E)$ is an undirected graph called the \emph{underlying graph},
    \item $\Sigma$ is a finite set of labels called the \emph{alphabet}, and
    \item $\Pi = (\pi_e)_{e \in E}$ is a collection of binary \emph{constraints} and
    each constraint $\pi_e \subseteq \Sigma^e$ for edge $e \in E$
    consists of pairs of admissible labels that
    endpoints of $e$ can take.
\end{itemize}
If the underlying graph of $G$ is bipartite,
$G$ is called a \emph{two-prover game} or simply \emph{game}.
We write $G = (X,Y,E,\Sigma,\Pi)$ to stress that
the underlying graph of $G$ is a bipartite graph with bipartition $(X,Y)$.
Unless otherwise specified,
we use $G$ to represent a game.
Moreover, $G$ is said to be a \emph{projection game} if 
every constraint $\pi_{(x,y)}$ has a projection property; i.e.,
each label $\beta \in \Sigma$ for $y$ has a unique label $\alpha \in \Sigma$ for $x$ 
such that $(\alpha, \beta) \in \pi_{(x,y)}$.
Such $\alpha$ is denoted $\pi_{(x,y)}(\beta)$.

A \emph{labeling} for a constraint graph $G$
is a mapping $\asg \colon V \to \Sigma$ that assigns a label of $\Sigma$ to each vertex of $V$.
We say that $\asg$ \emph{satisfies} edge $e = (v,w) \in E$ (or constraint $\pi_e$) if
$(\asg(v), \asg(w)) \in \pi_e$, and
$\asg$ \emph{satisfies} $G$ if it satisfies all edges of $G$.
For two satisfying labelings $\asg_\ini$ and $\asg_\tar$ for $G$,
a \emph{reconfiguration sequence} from $\asg_\ini$ to $\asg_\tar$
is any sequence of labelings starting from $\asg_\ini$ and ending with $\asg_\tar$ such that
each labeling is obtained from the previous one by changing the label of a single vertex; i.e., 
any two neighboring labelings differ in exactly one vertex.
The \prb{Label Cover Reconfiguration} problem is defined as follows:

\begin{itembox}[l]{\prb{Label Cover Reconfiguration}}
\begin{tabular}{ll}
    \textbf{Input:}
    & a satisfiable game $G$ and
    its two satisfying labelings $\asg_\ini$ and $\asg_\tar$.
    \\
    \textbf{Question:}
    & is there a reconfiguration sequence from $\psi _{\mathsf{ini}}$ to $\psi _{\mathsf{tar}}$ consisting of satisfying labelings?
\end{tabular}
\end{itembox}

\prb{Label Cover Reconfiguration} is known to be $\PSPACE$-complete, see, e.g., \cite{gopalan2009connectivity,ito2011complexity}.
We then proceed to an optimization version \cite{ito2011complexity} of \prb{Label Cover Reconfiguration}, in which
we are allowed to touch \emph{non-satisfying} labelings.
For a game $G=(V,E,\Sigma,\Pi)$ and its labeling $\asg \colon V \to \Sigma$,
let $\val_G(\asg)$ denote the fraction of edges satisfied by $\asg$; namely,
\begin{align}
    \val_G(\asg) \coloneq \frac{1}{|E|}
        \left|\left\{e \in E \Bigm| \asg \text{\footnotesize{ satisfies }} e \right\}\right|.
\end{align}
For any reconfiguration sequence
$\sqasg = \langle \asg^{(0)}, \ldots, \asg^{(\TTT)} \rangle$,
let $\val_G(\sqasg)$ denote the minimum fraction of satisfied edges
over all $\asg^{(\ttt)}$'s in $\sqasg$; namely,
\begin{align}
    \val_G(\sqasg) \coloneq \min_{\asg^{(\ttt)} \in \sqasg} \val_G(\asg^{(\ttt)}).
\end{align}
Then, \prb{Maxmin Label Cover Reconfiguration} is defined as the following optimization problem:

\begin{itembox}[l]{\prb{Maxmin Label Cover Reconfiguration}}
\begin{tabular}{ll}
    \textbf{Input:}
    & a satisfiable game $G$ and
    its two satisfying labelings $\asg_\ini$ and $\asg_\tar$.
    \\
    \textbf{Question:}
    & maximize $\val_G(\sqasg)$ subject to that $\sqasg$ is a reconfiguration sequence from $\asg_\ini$ to $\asg_\tar$.
\end{tabular}
\end{itembox}
Note that the present definition does not request an actual reconfiguration sequence,
which may be of exponential length.

\section{Nearly $\frac{1}{4}$-factor Approximation}
\label{sec:approx}
In this section, we show that \prb{Maxmin Label Cover Reconfiguration} is
approximately reconfigurable within a factor of $\frac{1}{4}-o(1)$ for some graph classes.
For a game $G$ and its two labelings $\asg_\ini$ and $\asg_\tar$,
let $\val_G(\asg_\ini \reco \asg_\tar)$ denote the maximum value of $\val_G(\sqasg)$
over all possible reconfiguration sequences $\sqasg$ from $\asg_\ini$ to $\asg_\tar$; namely,
\begin{align}
    \val_G(\asg_\ini \reco \asg_\tar)
    \coloneq \max_{\sqasg = \langle \asg_\ini, \ldots, \asg_\tar \rangle} \val_G(\sqasg).
\end{align}

\begin{theorem}
\label{thm:approx:main}
For a game $G$ over $m$ edges and
its two satisfying labelings $\asg_\ini$ and $\asg_\tar$,
the following hold:
\begin{enumerate}[label=(\arabic*)]
    \item If $G$ is biregular with bipartition $(X,Y)$ (i.e., every vertex on the same part has the same degree), then
        $\val_G(\asg_\ini \reco \asg_\tar) \geq \frac{1}{4}\left(1-\frac{1}{\min\{|X|,|Y|\}}\right)$.
    \item If $G$ is balanced (i.e., its two parts have the same size) and has no isolated vertices, then
    $\val_G(\asg_\ini \reco \asg_\tar) \geq \frac{1}{4}\left(1-\frac{1}{\sqrt{m}}\right)$.
    \item If the average degree of $G$ is at least $\frac{6}{\epsilon}$ for $\epsilon \in (0,\frac{1}{4})$, then
    $\val_G(\asg_\ini \reco \asg_\tar) \geq \frac{1}{4}-\epsilon$.
    Moreover, the same result holds even if
    the underlying graph of $G$ is a general (i.e., non-bipartite) graph.
\end{enumerate}
Moreover, an explicit reconfiguration sequence for the former two cases can be found in polynomial time.
\end{theorem}
The proof of \cref{thm:approx:main} relies on the following claim.

\begin{claim}
\label{lem:approx:via}
For a constraint graph $G = (V,E,\Sigma,\Pi)$ and its two satisfying labelings $\asg_\ini$ and $\asg_\tar$,
it holds that
\begin{align}
    \val_G(\asg_\ini \reco \asg_\tar) \geq 
    \max_{\emptyset \subsetneq S \subsetneq V}
    \min\left\{ \frac{|E[S]|}{|E|}, \frac{|E[V \setminus S]|}{|E|} \right\},
\end{align}
where $E[S]$ is the edge set of a subgraph of $G$ induced by vertex set $S$.
\end{claim}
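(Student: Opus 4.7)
}
The plan is to fix an arbitrary $S \subset V$ and explicitly construct a reconfiguration sequence from $\psi_\sss$ to $\psi_\ttt$ whose value is at least $\min\{|E[S]|, |E[V\setminus S]|\}/|E|$; taking the maximum over $S$ then yields the claim. The construction has two phases. In phase one, I would reconfigure the vertices of $V\setminus S$ one by one, changing each from its $\psi_\sss$-value to its $\psi_\ttt$-value in an arbitrary order, while leaving every vertex of $S$ untouched. In phase two, I would reconfigure the vertices of $S$ one by one from their $\psi_\sss$-values to their $\psi_\ttt$-values, again in any order. The concatenation is a valid reconfiguration sequence from $\psi_\sss$ to $\psi_\ttt$ since it changes a single vertex at each step and ends at $\psi_\ttt$.

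The main observation, which does the real work, is that at every intermediate labeling in phase one, every vertex of $S$ still carries its $\psi_\sss$-value; hence on the induced subgraph on $S$ the current labeling agrees with $\psi_\sss$, and since $\psi_\sss$ is a satisfying labeling for $G$, all edges of $E[S]$ are satisfied throughout phase one. Symmetrically, at every intermediate labeling in phase two, every vertex of $V\setminus S$ already carries its $\psi_\ttt$-value, so every edge of $E[V\setminus S]$ is satisfied during phase two because $\psi_\ttt$ satisfies $G$. Consequently, every labeling in the sequence satisfies at least $\min\{|E[S]|, |E[V\setminus S]|\}$ edges, which gives the desired lower bound on $\val_G(\psiS)$.

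There is no serious obstacle here; the only thing worth double-checking is that the argument does not require $G$ to be bipartite or the constraints to be projections, and indeed the construction and the observation above work for any constraint graph, matching the statement of the claim. Taking the maximum over $S \subset V$ on both sides then produces
\begin{align}
\val_G(\psi_\sss \reco \psi_\ttt) \geq \max_{S\subset V}\min\left\{\frac{|E[S]|}{|E|},\frac{|E[V\setminus S]|}{|E|}\right\},
\end{align}
completing the proof.
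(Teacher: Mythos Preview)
Your proof is correct and essentially identical to the paper's: both fix $S$, route through the intermediate labeling that agrees with $\psi_\sss$ on $S$ and with $\psi_\ttt$ on $V\setminus S$ (your phase-one endpoint is precisely the paper's $\psi_{\text{via}}$), and observe that $E[S]$ stays satisfied in phase one while $E[V\setminus S]$ stays satisfied in phase two. The paper just names the midpoint explicitly, but the construction and the argument are the same.
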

\begin{proof}
For a vertex set $S$, $\emptyset \subsetneq S \subsetneq V$,
we define a labeling
$\breve{\asg} \colon V \to \Sigma$ as follows:
\begin{align}
    \breve{\asg}(v) \coloneq
    \begin{cases}
        \asg_\ini(v) & \text{if } v \in S, \\
        \asg_\tar(v) & \text{if } v \notin S.
    \end{cases}
\end{align}
Consider transforming $\asg_\ini$ into $\breve{\asg}$
by changing the label of vertex $v \notin S$ from $\asg_\ini(v)$ to $\asg_\tar(v)$ one by one.
Since the label of any vertex in $S$ has never been changed,
any edge of $E[S]$ is always satisfied during this transformation; i.e.,
$\val_G(\asg_\ini \reco \breve{\asg}) \geq \frac{|E[S]|}{|E|}$.
Similarly, it follows that
$\val_G(\breve{\asg} \reco \asg_\tar) \geq \frac{|E[V \setminus S]|}{|E|}$.
Consequently, we derive
\begin{align}
    \val_G(\asg_\ini \reco \asg_\tar)
    \geq \min\Bigl\{
        \val_G(\asg_\ini \reco \breve{\asg}), \val_G(\breve{\asg} \reco \asg_\tar)
    \Bigr\}
    \geq \min\left\{ \frac{|E[S]|}{|E|}, \frac{|E[V \setminus S]|}{|E|} \right\},
\end{align}
as desired.
\end{proof}
By \cref{lem:approx:via}, it suffices to find or show the existence of
a partition of the vertex set,
each of which contains $\gtrapprox \frac{|E|}{4}$ edges for each graph class.

\begin{remark}
    \cref{lem:approx:via} inherently cannot derive a better-than-$\frac{1}{4}$ approximation
    since for an $n$-vertex complete graph $G=(V,E)$,
    \begin{align}
        \max_{\emptyset \subsetneq S \subsetneq V} \min\left\{
            \frac{|E[S]|}{|E|}, \frac{|E[V \setminus S]|}{|E|}
        \right\}
        \leq \frac{{\frac{n}{2} \choose 2}}{{n \choose 2}}
        = \frac{1}{4} - o(1).
    \end{align}
\end{remark}

\subsection{Biregular Graphs}
Here, we partition the vertex set of a biregular graph by the following lemma,
which brings us the first statement of \cref{thm:approx:main}.

\begin{lemma}
\label{lem:approx:biregular}
For a biregular graph $G = (X,Y,E)$ over $m$ edges,
there exists a partition $(S,\bar{S})$ of $X \cup Y$ such that
\begin{align}
    \min\Bigl\{|E[S]|, |E[\bar{S}]|\Bigr\}
    \geq \frac{m}{4}\left(1-\frac{1}{\min\{|X|, |Y|\}} \right).
\end{align}
Moreover, such a partition can be found in polynomial time.
\end{lemma}
\begin{proof}
Let $d_X \coloneq \frac{m}{|X|}$ and $d_Y \coloneq \frac{m}{|Y|}$
denote the degree of vertices of $X $ and $Y$, respectively.
Hereafter, for any pair of subsets $S \subseteq X$ and $T \subseteq Y$,
we define $e(S,T)$ as the number of edges of $G$ between $S$ and $T$.
Let
$(S,\bar{S})$ be any partition of $X$ such that $\min\{|S|, |\bar{S}|\} \geq \left\lfloor \frac{|X|}{2} \right\rfloor$ and
$(T,\bar{T})$ any partition of $Y$ such that $\min\{|T|, |\bar{T}|\} \geq \left\lfloor \frac{|Y|}{2} \right\rfloor$.
Define 
\begin{align}
    \theta \coloneq \frac{m}{4} \left( 1-\frac{1}{\min\{|X|, |Y|\}} \right).
\end{align}
By case analysis, we show that either of
$(S \cup T, \bar{S}\cup \bar{T})$ or $(S \cup \bar{T}, \bar{S} \cup T)$
is a desired partition.
\begin{description}
\item[\textbf{(Case 1)}]
    $e(S,T) \leq \theta$: 
    $e(S,\bar{T})$ and $e(\bar{S},T)$ can be bounded as follows:
    \begin{align}
    \begin{aligned}
        e(S,\bar{T})
        & = e(S,T\cup\bar{T}) - e(S,T)
        \geq \left\lfloor \frac{|X|}{2} \right\rfloor d_X - \theta \\
        & \geq \left(\frac{|X|-1}{2}\right) d_X - \theta
        \geq \frac{m}{2} \left(1-\frac{1}{|X|}\right) - \frac{m}{4} \left( 1-\frac{1}{\min\{|X|, |Y|\}} \right) \\
        & = \frac{m}{4} - m \left(\frac{1}{2|X|} - \frac{1}{4\min\{|X|,|Y|\}}\right)
        \geq \frac{m}{4} \left(1 - \frac{1}{\min\{|X|, |Y|\}}\right).
    \end{aligned}
    \end{align}
    \begin{align}
    \begin{aligned}
        e(\bar{S},T)
        & = e(S\cup\bar{S},T) - e(S,T)
        \geq \left\lfloor \frac{|Y|}{2} \right\rfloor d_Y - \theta \\
        & \geq \left(\frac{|Y|-1}{2}\right) d_Y - \theta
        \geq \frac{m}{2} \left(1-\frac{1}{|Y|}\right) - \frac{m}{4} \left( 1-\frac{1}{\min\{|X|, |Y|\}} \right) \\
        & = \frac{m}{4} - m \left(\frac{1}{2|Y|} - \frac{1}{4\min\{|X|,|Y|\}}\right)
        \geq \frac{m}{4} \left(1 - \frac{1}{\min\{|X|, |Y|\}}\right).
    \end{aligned}
    \end{align}
\item[\textbf{(Case 2)}]
    $e(\bar{S}, \bar{T}) \leq \theta$:
    Similarly to (Case 1), we obtain
    \begin{align}
        e(S, \bar{T}) \geq \frac{m}{4} \left(1 - \frac{1}{\min\{|X|, |Y|\}}\right)
        \text{ and }
        e(\bar{S}, T) \geq \frac{m}{4} \left(1 - \frac{1}{\min\{|X|, |Y|\}}\right).
    \end{align}
\item[\textbf{(Case 3)}]
    $e(S,T) > \theta$ and $e(\bar{S}, \bar{T}) > \theta$: We are already done.
\end{description}
Consequently, either
$\min\{e(S,T), e(\bar{S}, \bar{T})\}$
or
$\min\{e(S,\bar{T}), e(\bar{S}, T)\}$
is no less than $\theta$, implying that
\begin{align}
    \max\Biggr\{
        \min\Bigl\{ |E[S\cup T]|, |E[\bar{S}\cup\bar{T}]| \Bigr\},
        \min\Bigl\{ |E[S\cup \bar{T}]|, |E[\bar{S}\cup T]| \Bigr\}
    \Biggr\}
    \geq \theta,
\end{align}
completing the proof.
\end{proof}

\subsection{Balanced Bipartite Graphs}
Subsequently, for a balanced bipartite graph $G = (X,Y,E)$ with no isolated vertices,
we partition each $X$ and $Y$ in a particular well-balanced manner as follows:

\begin{lemma}
\label{lem:approx:balanced}
For a balanced bipartite graph $G = (X,Y,E)$ over $m$ edges with no isolated vertices such that $|X|=|Y|$,
there exists a partition $(S,\bar{S})$ of $X \cup Y$ such that
\begin{align}
    \min\Bigl\{|E[S]|, |E[\bar{S}]| \Bigr\} \geq \frac{m-\sqrt{m}}{4}.
\end{align}
Moreover, such a partition can be found in polynomial time.
\end{lemma}
\cref{lem:approx:balanced} leads to the second statement of \cref{thm:approx:main}.
Before proving \cref{lem:approx:balanced}, we show the following auxiliary lemma.

\begin{lemma}
\label{lem:approx:partition}
For any $n$ positive integers $x_1, \ldots, x_n$ in the range of $[n]$,
there exists a partition $(S,T)$ of $[n]$ such that
\begin{align}
    \min\left\{ \sum_{i \in S} x_i, \sum_{i \in T} x_i \right\} \geq \frac{m-\sqrt{m}}{2},
    \text{  where  }
    m \coloneq \sum_{i \in [n]} x_i.
\end{align}
Moreover, such $S$ and $T$ can be found in polynomial time.
\end{lemma}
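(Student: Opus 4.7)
The plan is to sort the integers in non-increasing order $x_1 \geq x_2 \geq \ldots \geq x_n$ and apply the classical longest-processing-time (LPT) greedy rule: iterate through the items in this order and place each one on the currently smaller pile (breaking ties arbitrarily). This clearly runs in polynomial time, so what remains is to verify that the resulting pile imbalance $d \triangleq \left|\sum_{i \in S} x_i - \sum_{i \in T} x_i\right|$ is always at most $\sqrt{m}$.

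The key step is to look at the last time the algorithm touched the pile that ends up larger. Call that step $k$; by the definition of $k$, every $x_i$ with $i > k$ is subsequently placed on the other (eventually smaller) pile. Writing $\delta_{k-1} \geq 0$ for the pile gap just before step $k$ and $T_k \triangleq \sum_{i > k} x_i$ for the sum of items placed after step $k$, a short bookkeeping argument yields the closed-form expression
\[
    d = x_k - \delta_{k-1} - T_k,
\]
together with the inequality $d \geq 0$, which just records the fact that the ``larger'' pile is indeed at least as large as the other one.

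I then suppose for contradiction that $d > \sqrt{m}$. Combined with $\delta_{k-1} \geq 0$, the lower bound $T_k \geq n - k$ (each of the $n - k$ remaining items is $\geq 1$), and the upper bound $x_k \leq n$, the identity $x_k = d + \delta_{k-1} + T_k$ forces
\[
    n \;\geq\; x_k \;>\; \sqrt{m} + (n - k),
\]
so $k > \sqrt{m}$ and, separately, $x_k > \sqrt{m}$. Because the items are sorted non-increasingly, $x_i \geq x_k > \sqrt{m}$ for every $i \leq k$, and hence
\[
    \sum_{i \leq k} x_i \;>\; k \sqrt{m} \;>\; m,
\]
contradicting $\sum_{i \leq k} x_i \leq m$.

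The delicate ingredient is really the closed-form $d = x_k - \delta_{k-1} - T_k$, which requires checking that the ``larger'' pile stays larger throughout the remaining steps $k+1, \ldots, n$; this is automatic once $k$ is defined as the \emph{last} step touching that pile. The rest is arithmetic, and it is pleasing that the two halves of the hypothesis $x_i \in [n]$ are used in a matched way: the lower bound $x_i \geq 1$ powers $T_k \geq n - k$, while the upper bound $x_i \leq n$ powers $x_k \leq n$.
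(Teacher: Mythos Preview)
Your proof is correct and uses the same LPT greedy algorithm as the paper, but your analysis of the final imbalance is genuinely different and cleaner. The paper fixes a threshold index $\lceil \epsilon n\rceil$ with $\epsilon=\sqrt{m}/n$, observes that the running imbalance there is at most $n$, and then splits into two cases depending on whether that imbalance exceeds the tail sum $\sum_{i\geq\lceil\epsilon n\rceil}x_i$: one case bounds the final gap by $\epsilon n$, the other by $x_{\lceil\epsilon n\rceil}\leq d/\epsilon$ via the averaging inequality $x_k\leq\frac{(d-1)n}{k}+1$, and both evaluate to $\sqrt{m}$. You instead look at the last step $k$ that touches the eventually larger pile, which gives the exact identity $d=x_k-\delta_{k-1}-T_k$ and lets you derive the contradiction $\sum_{i\leq k}x_i\geq kx_k>k\sqrt{m}>m$ in one shot, with no case split and no separately tuned threshold. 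The paper's route has the small advantage that it never needs to reason about which pile is ultimately larger, but your argument is shorter and makes more transparent why both hypotheses $x_i\geq 1$ and $x_i\leq n$ are needed.
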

\begin{proof}
The case of $n=1$ is trivial because $m=1$ and so $\frac{m-\sqrt{m}}{2} = 0$.
Hereafter, we assume that $n \geq 2$.
Denote $\vec{x}(S) \coloneq \sum_{i \in S} x_i$ for any $S \subseteq [n]$.
Consider the following naive greedy algorithm for \prb{Bin Packing}:

\begin{itembox}[l]{\textbf{Greedy algorithm}}
\begin{algorithmic}[1]
    \State sort $x_i$'s in descending order.
    \State define $S^{(0)} \coloneq \emptyset$ and $T^{(0)} \coloneq \emptyset$.
    \For {$i = 1 \To n$} 
        \If {$\vec{x}(S^{(i-1)}) < \vec{x}(T^{(i-1)})$}
            \State define $S^{(i)} \coloneq S^{(i-1)} \cup \{i\}$ and $T^{(i)} \coloneq T^{(i-1)}$.
        \Else
            \State define $S^{(i)} \coloneq S^{(i-1)}$ and $T^{(i)} \coloneq T^{(i-1)} \cup \{i\}$.
        \EndIf
    \EndFor
    \State \textbf{return} $S^{(n)}$ and $T^{(n)}$.
\end{algorithmic}
\end{itembox}

We will show
$
    \left| \vec{x}(S^{(n)}) - \vec{x}(T^{(n)}) \right| \leq \sqrt{m}.
$
Define
\begin{align}
    d \coloneq \frac{m}{n} \text{ and }
    \epsilon \coloneq \sqrt{\frac{d}{n}} = \sqrt{\frac{m}{n^2}}.
\end{align}
Suppose $x_i$'s are sorted in descending order.
Then, for every $k \in [n]$,
$x_1, \ldots, x_k$ are at least $x_k$ and
$x_{k+1}, \ldots, x_n$ are at least $1$.
Therefore, we have
\begin{align}
\label{eq:approx:partition:xk}
\begin{aligned}
& \underbrace{x_k \cdot k}_{\text{contribution of } x_1, \ldots, x_k}
+ \underbrace{1 \cdot (n-k)}_{\text{contribution of } x_{k+1}, \ldots, x_n} \leq m = nd \\
& \implies k(x_k - 1) \leq (d-1)n \\
& \implies x_k \leq \frac{d-1}{k}n + 1.
\end{aligned}
\end{align}
Observing easily that
$| \vec{x}(S^{(\lceil\epsilon n\rceil-1)}) - \vec{x}(T^{(\lceil\epsilon n\rceil-1)}) | \leq n$
due to the nature of the greedy algorithm,
we consider the following two cases:

\begin{description}
    \item[\textbf{(Case 1)}]
    If the absolute difference between
    $\vec{x}(S^{(\lceil \epsilon n \rceil-1)})$ and
    $\vec{x}(T^{(\lceil \epsilon n \rceil-1)})$
    is larger than $\vec{x}([\lceil \epsilon n \rceil \isep n])$,
    we would have added all of $[\lceil \epsilon n \rceil \isep n]$
    into either $S^{(n)}$ or $T^{(n)}$.
    Thus, the absolute difference between
    $\vec{x}(S^{(n)})$ and $\vec{x}(T^{(n)})$
    will be simply
    \begin{align}
        \left|\vec{x}(S^{(n)}) - \vec{x}(T^{(n)})\right|
        \leq
          \underbrace{
            \left| \vec{x}(S^{(\lceil \epsilon n \rceil-1)}) - \vec{x}(T^{(\lceil \epsilon n \rceil-1)})\right|
            }_{\leq n}
          - \underbrace{
          \vec{x}([\lceil \epsilon n \rceil \isep n])
          }_{\geq n-\lceil \epsilon n \rceil + 1}
          \leq \lceil \epsilon n \rceil - 1
          \leq \epsilon n.
    \end{align}
    
    \item[\textbf{(Case 2)}]
    Otherwise
    (i.e.,
        $| \vec{x}(S^{(\lceil\epsilon n\rceil-1)}) - \vec{x}(T^{(\lceil\epsilon n\rceil-1)}) | \leq \vec{x}([\lceil \epsilon n \rceil \isep n]$),
    the absolute difference between $\vec{x}(S^{(n)})$ and $\vec{x}(T^{(n)})$
    must be at most $x_{\lceil \epsilon n \rceil}$.
    Using \cref{eq:approx:partition:xk}, we can derive
    \begin{align}
        \left|\vec{x}(S^{(n)}) - \vec{x}(T^{(n)})\right|
        \leq
        x_{\lceil \epsilon n \rceil}
        \leq \frac{d-1}{\lceil \epsilon n \rceil}n + 1
        \leq \frac{d-1}{\epsilon}+1
        \leq \frac{d}{\epsilon},
    \end{align}
    where the last inequality holds because $\epsilon \in (0,1]$.
\end{description}

In either case, we obtain
\begin{align}
    \left|\vec{x}(S^{(n)}) - \vec{x}(T^{(n)})\right|
    \leq \max\left\{ \epsilon n, \frac{d}{\epsilon} \right\}
    = \sqrt{dn}
    = \sqrt{m}.
\end{align}
Consequently, we derive
\begin{align}
\begin{aligned}  
& 2 \min\Bigl\{\vec{x}(S^{(n)}), \vec{x}(T^{(n)})\Bigr\} + \sqrt{m}
\geq \min\Bigl\{\vec{x}(S^{(n)}), \vec{x}(T^{(n)})\Bigr\} + \max\Bigl\{\vec{x}(S^{(n)}), \vec{x}(T^{(n)})\Bigr\}
= m \\
\implies & \min\Bigl\{\vec{x}(S^{(n)}), \vec{x}(T^{(n)})\Bigr\} \geq \frac{m-\sqrt{m}}{2},
\end{aligned}
\end{align}
completing the proof.
\end{proof}

\begin{proof}[Proof of \cref{lem:approx:balanced}]
Let $d(v)$ denote the degree of vertex $v$ of $G$.
Hereafter, for any pair of subsets $S \subseteq X$ and $T \subseteq Y$,
we define $e(S,T)$ as the number of edges of $G$ between $S$ and $T$.
Since it holds that $1 \leq d(v) \leq |X|=|Y|$ for every vertex $v$,
we use \cref{lem:approx:partition} to construct a pair of
partitions $(S,\bar{S})$ of $X$ and $(T,\bar{T})$ of $Y$ such that
\begin{align}
    \min\left\{ \sum_{x \in S}d(x), \sum_{x \in \bar{S}}d(x) \right\}
    \geq \frac{m-\sqrt{m}}{2}
    \text{  and  }
    \min\left\{ \sum_{y \in T}d(x), \sum_{y \in \bar{T}}d(x) \right\}
    \geq \frac{m-\sqrt{m}}{2}.
\end{align}
Define $\theta \coloneq \frac{m-\sqrt{m}}{4}$.
By case analysis, we show that either of $(S\cup T, \bar{S}\cup \bar{T})$ or $(S\cup \bar{T}, \bar{S}\cup T)$
gives us a desired partition.
\begin{description}
    \item[\textbf{(Case 1)}] $e(S,T) \leq \theta$:
    $e(S,\bar{T})$ and $e(\bar{S},T)$ can be bounded from below as follows:
    \begin{align}
        e(S,\bar{T})
        & = e(S,T\cup\bar{T}) - e(S,T)
        \geq \sum_{x \in S} d(x) - \theta
        \geq \frac{m-\sqrt{m}}{4}. \\
        e(\bar{S},T)
        & =e(S\cup\bar{S}, T) - e(S,T)
        \geq \sum_{y \in T} d(y) - \theta
        \geq \frac{m-\sqrt{m}}{4}.
    \end{align}
    \item[\textbf{(Case 2)}] $e(\bar{S},\bar{T}) \leq \theta$:
    Similarly to (Case 1), we can show that
    \begin{align}
        e(S,\bar{T}) \geq \frac{m-\sqrt{m}}{4} \text{ and }
        e(\bar{S},T) \geq \frac{m-\sqrt{m}}{4}.
    \end{align}
    \item[\textbf{(Case 3)}] $e(S,T) > \theta$ and $e(\bar{S},\bar{T}) > \theta$:
    We are already done.
\end{description}
Consequently, either
$\min\{e(S,T), e(\bar{S}, \bar{T})\}$
or
$\min\{e(S,\bar{T}), e(\bar{S}, T)\}$
is no less than $\frac{m-\sqrt{m}}{4}$,
implying that
\begin{align}
    \max\Biggl\{
        \min\Bigl\{ |E[S\cup T]|, |E[\bar{S}\cup\bar{T}]| \Bigr\},
        \min\Bigl\{ |E[S\cup \bar{T}]|, |E[\bar{S}\cup T]| \Bigr\}
    \Biggr\}
    \geq \frac{m-\sqrt{m}}{4},
\end{align}
completing the proof.
\end{proof}

\subsection{High Average Degree Graphs}

We refer to \citet{kuhn2003partitions},
which along with \cref{lem:approx:via}
directly implies the third statement of \cref{thm:approx:main}.

\begin{theorem}[\protect{\citet[Corollary 18]{kuhn2003partitions}}]
For every graph $G=(V,E)$ of $m$ edges and average degree $\frac{6}{\epsilon}$ for $\epsilon \in (0,\frac{1}{4})$,
its vertex set $V$ can be partitioned into
$S$ and $T$ such that both $E[S]$ and $E[T]$ contain at least $m(\frac{1}{4}-\epsilon)$ edges.
\end{theorem}

\section{Parallel Repetition Does Not Decrease the Soundness Error}
\label{sec:pr}

We show that a ``naive'' parallel repetition of
\prb{Maxmin Label Cover Reconfiguration} does not decrease its optimal value.
The \emph{product} of two games is first introduced.
\begin{definition}
Let
$G_1 = (X_1,Y_1,E_1,\Sigma_1,\Pi_1 = (\pi_{1,e})_{e \in E_1})$ and 
$G_2 = (X_2,Y_2,E_2,\Sigma_2,\Pi_2 = (\pi_{2,e})_{e \in E_2})$ be
two games.
Then, the \emph{product} of $G_1$ and $G_2$, denoted $G_1 \otimes G_2$,
is defined as a new game
$(X_1 \times X_2, Y_1 \times Y_2, E, \Sigma_1 \times \Sigma_2, \Pi = (\pi_e)_{e \in E})$, where
\begin{align}
    E \coloneq \left\{
        ((x_1,x_2), (y_1,y_2)) \Bigm| (x_1,y_1) \in E_1 \text{ and } (x_2,y_2) \in E_2
    \right\},
\end{align}
and the constraint $\pi_e \subseteq (\Sigma_1 \times \Sigma_2)^e$
for each edge $e = ((x_1,x_2),(y_1,y_2)) \in E$ is defined as
\begin{align}
    \pi_e \coloneq \left\{
        ((\alpha_1,\alpha_2), (\beta_1,\beta_2)) \Bigm|
        (\alpha_1, \beta_1) \in \pi_{1,(x_1,y_1)} \text{ and }
        (\alpha_2, \beta_2) \in \pi_{2,(x_2,y_2)}
    \right\}.
\end{align}
\end{definition}

The \emph{$\rho$-fold parallel repetition} of a game $G$,
denoted $G^{\otimes \rho}$,
for any positive integer $\rho \in \bbN$ is defined as
\begin{align}
    G^{\otimes \rho} \coloneq \underbrace{G \otimes \cdots \otimes G}_{\rho \text{ times}}.
\end{align}
For a game $G$,
let $\val(G)$ denote the maximum fraction of edges satisfied by any possible labeling of $G$.
The parallel repetition theorem~\cite{raz1998parallel} states that 
for every game $G$ with $\val(G) \leq 1-\epsilon$ for some $\epsilon \in (0,1)$,
it holds that
$\val(G^{\otimes \rho}) \leq (1-\bar{\epsilon})^{\frac{\rho}{\log |\Sigma|}}$, 
where $\bar{\epsilon} \in (0,1)$ depends only on $\epsilon$.

We can think of a reconfiguration analogue for parallel repetition.
For two labelings
$\asg_1 \colon (X_1 \cup Y_1) \to \Sigma_1$ for $G_1=(X_1,Y_1,E_1,\Sigma_1,\Pi_1)$ and
$\asg_2 \colon (X_2 \cup Y_2) \to \Sigma_2$ for $G_2=(X_2,Y_2,E_2,\Sigma_2,\Pi_2)$,
the \emph{product labeling} of $\asg_1$ and $\asg_2$,
denoted $\asg_1 \otimes \asg_2$,
is defined as a labeling
$\asg \colon (X_1 \times X_2) \cup (Y_1 \times Y_2) \to \Sigma_1 \times \Sigma_2$ such that
\begin{align}
    \asg(v_1, v_2) \coloneq (\asg_1(v_1), \asg_2(v_2))
        \text{ for all } (v_1, v_2) \in (X_1 \times X_2) \cup (Y_1 \times Y_2).
\end{align}
We write $\asg^{\otimes \rho}$ for denoting
$
    \underbrace{\asg \otimes \cdots \otimes \asg}_{\rho \text{ times}}.
$
It is easy to see that
$\val_G(\asg_\ini \reco \asg_\tar) = 1$ implies
$\val_{G^{\otimes \rho}}(\asg_\ini^{\otimes \rho} \reco \asg_\tar^{\otimes \rho}) = 1$.
So, given that $\val_G(\asg_\ini \reco \asg_\tar) < 1$,
does the value of the $\rho$-fold parallel repetition of a game
decreases as the increase of $\rho$?
Unfortunately, the answer is negative.

\begin{observation}
Let $G$ be a satisfiable game and
$\asg_\ini$ and $\asg_\tar$ be its two satisfying labelings.
Then, for every positive integer $\rho \in \bbN$,
it holds that
\begin{align}
\label{eq:pr}
    \val_{G^{\otimes\rho}}\left(\asg_\ini^{\otimes\rho} \reco \asg_\tar^{\otimes\rho}\right)
    \geq \val_G(\asg_\ini \reco \asg_\tar).
\end{align}
\end{observation}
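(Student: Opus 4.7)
The plan is to lift the optimal reconfiguration sequence in $G$ to one in $G^{\otimes\rho}$ by transforming the $\rho$ coordinates one after another. Fix an optimal $\psiS^\star = \langle \psi^{(0)} = \psi_\sss, \psi^{(1)}, \ldots, \psi^{(\ell)} = \psi_\ttt \rangle$ in $G$ with $\val_G(\psiS^\star) \triangleq \nu$, and let $u_i$ be the unique vertex rewritten at step $i$. I would build a reconfiguration sequence $\sigmaS$ for $G^{\otimes\rho}$ in $\rho$ consecutive \emph{rounds}: round $k$ transforms the $k$-th coordinate of every macro vertex from $\psi_\sss$ to $\psi_\ttt$ by emulating $\psiS^\star$, keeping coordinates $k' < k$ at $\psi_\ttt$ and $k' > k$ at $\psi_\sss$. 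Round $k$ is split into $\ell$ \emph{sub-rounds}, where sub-round $i$ flips the $k$-th entry of each macro vertex $(v_1, \ldots, v_\rho)$ with $v_k = u_i$ from $\psi^{(i-1)}(u_i)$ to $\psi^{(i)}(u_i)$, one macro vertex per reconfiguration step.

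At every sub-round boundary the labeling is the product $\psi_\ttt^{\otimes(k-1)} \otimes \psi^{(i)} \otimes \psi_\sss^{\otimes(\rho-k)}$, whose value on $G^{\otimes\rho}$ factorizes as $\prod_{k'} \val_G(\cdot)$ by a direct count (a macro edge is satisfied by a product labeling iff each factor satisfies its projected edge of $G$). At the boundary above this equals $\val_G(\psi^{(i)}) \geq \nu$. So I only need to control the \emph{interior} of a sub-round, where the labeling is not a product.

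Inside sub-round $(k, i)$, only coordinate $k$ is moving and only at macro vertices whose $k$-th entry is $u_i$; since $u_i$ lies on one side of the bipartition of $G$, all updated macro vertices lie on a single side of $G^{\otimes\rho}$ and therefore share no incident macro edge. Consequently each macro edge flips its coordinate-$k$ status at most once during the sub-round, driven by the updating of at most one macro vertex. Writing $S$ for the currently-updated set and $V(S)$ for the number of satisfied macro edges, this yields the linear decomposition
\begin{align*}
V(S) = V_0 + \sum_{v \in S}\bigl(b(v) - a(v)\bigr),
\end{align*}
where $a(v)$ (resp.\ $b(v)$) counts the incident macro edges that become unsatisfied (resp.\ newly satisfied) upon updating $v$, and $V_0 = \val_G(\psi^{(i-1)}) \cdot |E|^\rho$. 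Ordering the updates greedily---first all $v$ with $b(v) \geq a(v)$, then the rest---keeps $V(S)$ non-decreasing from $V_0$ in the first phase and non-increasing down to $V_1 = \val_G(\psi^{(i)}) \cdot |E|^\rho$ in the second phase, so $V(S) \geq \min(V_0, V_1) \geq \nu \cdot |E|^\rho$ at every intermediate state.

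The main obstacle is pinning down the per-vertex linearity of $V(S)$: this rests on the bipartiteness observation above, which ensures no macro edge is touched by two distinct updates within a single sub-round. Once that is in place the greedy reordering is a one-line argument, and concatenating the bound across all $\rho\ell$ sub-rounds yields $\val_{G^{\otimes\rho}}(\sigmaS) \geq \nu = \val_G(\psi_\sss \reco \psi_\ttt)$, proving the observation.
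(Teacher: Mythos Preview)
Your proposal is correct and follows the same high-level strategy as the paper: lift an optimal base sequence coordinate by coordinate, and bound each sub-round by $\min\{\val_G(\psi^{(i-1)}),\val_G(\psi^{(i)})\}$, then concatenate. The only technical difference is in how the interior of a sub-round is controlled. The paper does not reorder updates at all: it observes that for any intermediate labeling, each $X$-side macro vertex is either in its ``before'' or ``after'' state, so the number of satisfied incident edges equals one of the two endpoint counts; summing the per-vertex minima and using that only the single base vertex $x^\star$ matters gives $\min\{\val_G(\psi^{(i-1)}),\val_G(\psi^{(i)})\}$ directly. Your argument instead establishes additivity of $V(S)$ and then reorders greedily. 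Both are valid; in fact your greedy is stronger than needed here, since every increment $b(v)-a(v)$ is a nonnegative multiple of the same quantity (the change in the number of satisfied edges at $x^\star$ in $G$), so all increments share a sign and any ordering already keeps $V(S)$ between $V_0$ and $V_1$.
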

\begin{proof}
Let $G = (X,Y,E,\Sigma,\Pi)$ be a satisfiable game.
Given a reconfiguration sequence
$\sqasg = \langle \asg^{(0)} \ldots, \asg^{(\TTT)} \rangle$
from $\asg_\ini$ to $\asg_\tar$ such that
$\val_G(\sqasg) = \val_G(\asg_\ini \reco \asg_\tar)$,
we will show that for every $k \in [\rho]$,
\begin{align}
    \val_{G^{\otimes \rho}}(
        \asg_\ini^{\otimes k} \otimes \asg_\tar^{\otimes \rho-k} \reco
        \asg_\ini^{\otimes k-1} \otimes \asg_\tar^{\otimes \rho-k+1}
    )
    \geq
    \val_G(\asg_\ini \reco \asg_\tar),
\end{align}
which implies \cref{eq:pr}.
To this end, we shall bound the following value for each $\ttt \in [\TTT]$:
\begin{align}
\label{eq:pr:bound}
    \val_{G^{\otimes \rho}}(
        \asg_\ini^{\otimes k-1} \otimes \asg^{(\ttt-1)} \otimes \asg_\tar^{\otimes \rho-k}
        \reco
        \asg_\ini^{\otimes k-1} \otimes \asg^{(\ttt)} \otimes \asg_\tar^{\otimes \rho-k}
    ).
\end{align}
Hereafter, we
fix $k \in [\rho]$ and $\ttt \in [\TTT]$.
Suppose $\asg^{(\ttt-1)}$ and $\asg^{(\ttt)}$ differ in some $x^\star \in X$.
\cref{eq:pr:bound} can be shown similarly
when $\asg^{(\ttt-1)}$ and $\asg^{(\ttt)}$ differ in some $y^\star \in Y$.
Construct
a reconfiguration sequence $\sqasg^{(k,\ttt)}$ from
$\asg_\ini^{\otimes k-1} \otimes \asg^{(\ttt-1)} \otimes \asg_\tar^{\otimes \rho-k}$
to
$\asg_\ini^{\otimes k-1} \otimes \asg^{(\ttt)} \otimes \asg_\tar^{\otimes \rho-k}$
by changing the label of each vertex of $G^{\otimes \rho}$
in which the two labelings differ.
Let $\psi^\bullet \colon X^\rho \cup Y^\rho \to \Sigma^\rho$ be any labeling of $\sqasg^{(k,\ttt)}$,
whose value is equal to

\begin{align}
\begin{aligned}    
    \val_{G^{\otimes \rho}}(\asg^\bullet)
    & = \frac{1}{|E(G^{\otimes \rho})|}
    \sum_{\substack{
        x_1, \ldots, x_\rho \in X, \\
        y_1, \ldots, y_\rho \in Y: \\
        (x_i, y_i) \in E \forall i
    }}
    \left\llbracket
        \bigwedge_{1 \leq i \leq \rho} \Bigl(\asg^\bullet(x_1, \ldots, x_\rho)_i, \asg^\bullet(y_1, \ldots, y_\rho)_i\Bigr) \in \pi_{(x_i,y_i)}
    \right\rrbracket \\
    & = \frac{1}{|E(G^{\otimes \rho})|}
    \sum_{\substack{
        x_1, \ldots, x_\rho \in X
    }}
    \underbrace{
    \sum_{\substack{
        y_1, \ldots, y_\rho \in Y: \\
        (x_i, y_i) \in E \forall i
    }}
    \left\llbracket
        \bigwedge_{1 \leq i \leq \rho} \Bigl(\asg^\bullet(x_1, \ldots, x_\rho)_i, \asg^\bullet(y_1, \ldots, y_\rho)_i\Bigr) \in \pi_{(x_i,y_i)}
    \right\rrbracket
    }_{\heartsuit \coloneq}.
\end{aligned}
\end{align}
Observe that
$\asg_\ini^{\otimes k-1} \otimes \asg^{(\ttt-1)} \otimes \asg_\tar^{\otimes \rho-k}$
and
$\asg_\ini^{\otimes k-1} \otimes \asg^{(\ttt)} \otimes \asg_\tar^{\otimes \rho-k}$
differ on vertex $(x_1, \ldots, x_\rho)$ of $X^\rho$
if and only if $x_k = x^\star$, and
they agree on every vertex $(y_1, \ldots, y_\rho)$ of $Y^\rho$; namely,
\begin{align}
    \asg(x_1, \ldots, x_\rho) & \text{ is }
    \left\{
    \begin{array}{c}
        \left(\asg_\ini(x_1), \ldots, \asg_\ini(x_{k-1}), \asg^{(\ttt-1)}(x_k), \asg_\tar(x_{k+1}), \ldots, \asg_\tar(x_\rho)\right)
        \\ \text{or} \\
        \left(\asg_\ini(x_1), \ldots, \asg_\ini(x_{k-1}), \asg^{(\ttt)}(x_k), \asg_\tar(x_{k+1}), \ldots, \asg_\tar(x_\rho)\right)
    \end{array}
    \right.
\end{align}
for all $(x_1, \ldots, x_\rho) \in X^\rho$ and
\begin{align}
    \asg(y_1, \ldots, y_\rho)
    \text{ is } \Bigl(\asg_\ini(y_1), \ldots, \asg_\ini(y_{k-1}), \underbrace{\asg^{(\ttt)}(y_k)}_{= \asg^{(\ttt-1)}(y_k)}, \asg_\tar(y_{k+1}), \ldots, \asg_\tar(y_\rho)\Bigr)
\end{align}
for all $(y_1, \ldots, y_\rho) \in Y^\rho$.
Since $\asg_\ini$ and $\asg_\tar$ satisfy all edges of $G$ by assumption,
we can evaluate the value of $\heartsuit$ as follows:

\begin{align}
\begin{aligned}
    \heartsuit & \geq \min\left\{
        \sum_{\substack{
            y_1, \ldots, y_\rho \in Y: \\
            (x_i, y_i) \in E \forall i
        }}
        \Biggl\llbracket
            \bigwedge_{1 \leq i \leq k-1} \underbrace{\asg_\ini \text{\footnotesize{ satisfies }} (x_i, y_i)}_{\text{always true}}
        \Biggr\rrbracket \cdot
        \Biggl\llbracket
            \asg^{(\ttt-1)} \text{\footnotesize{ satisfies }} (x_k, y_k)
        \Biggr\rrbracket \cdot
        \Biggl\llbracket
            \bigwedge_{k+1 \leq i \leq \rho} \underbrace{\asg_\tar \text{\footnotesize{ satisfies }} (x_i, y_i)}_{\text{always true}}
        \Biggr\rrbracket \right. ,
     \\
     &\qquad\qquad \left.
        \sum_{\substack{
            y_1, \ldots, y_\rho \in Y: \\
            (x_i, y_i) \in E \forall i
        }}
        \Biggl\llbracket
            \bigwedge_{1 \leq i \leq k-1} \underbrace{\asg_\ini \text{\footnotesize{ satisfies }} (x_i, y_i)}_{\text{always true}}
        \Biggr\rrbracket \cdot
        \Biggl\llbracket
            \asg^{(\ttt)} \text{\footnotesize{ satisfies }} (x_k, y_k)
        \Biggr\rrbracket \cdot
        \Biggl\llbracket
            \bigwedge_{k+1 \leq i \leq \rho} \underbrace{\asg_\tar \text{\footnotesize{ satisfies }} (x_i, y_i)}_{\text{always true}}
        \Biggr\rrbracket
    \right\} \\
    & = \min\left\{
        \sum_{\substack{
            y_1, \ldots, y_\rho: \\
            (x_i, y_i) \in E \forall i
        }} \left\llbracket
            \asg^{(\ttt-1)} \text{\footnotesize{ satisfies }} (x_k, y_k)
        \right\rrbracket,
        \sum_{\substack{
            y_1, \ldots, y_\rho: \\
            (x_i, y_i) \in E \forall i
        }} \left\llbracket
            \asg^{(\ttt)} \text{\footnotesize{ satisfies }} (x_k, y_k)
        \right\rrbracket
     \right\}.
\end{aligned}
\end{align}
Consequently,
the value of $\asg^\bullet$ can be bounded as follows:

\begin{align}
\begin{aligned}
    & \val_{G^{\otimes \rho}}(\asg^\bullet) \\
    & \geq \frac{1}{|E(G^{\otimes \rho})|}
    \sum_{x_1, \ldots, x_\rho}
    \min\left\{
        \sum_{\substack{
            y_1, \ldots, y_\rho: \\
            (x_i, y_i) \in E \forall i
        }} \left\llbracket
            \asg^{(\ttt-1)} \text{\footnotesize{ satisfies }} (x_k, y_k)
        \right\rrbracket,
        \sum_{\substack{
            y_1, \ldots, y_\rho: \\
            (x_i, y_i) \in E \forall i
        }} \left\llbracket
            \asg^{(\ttt)} \text{\footnotesize{ satisfies }} (x_k, y_k)
        \right\rrbracket
     \right\} \\
    & = \frac{1}{|E|^{\rho}}
        \sum_{\substack{
            x_1,\ldots,x_{k-1},x_{k+1},\ldots,x_\rho,\\
            y_1,\ldots,y_{k-1},y_{k+1},\ldots,y_\rho:\\
            (x_i, y_i) \in E \forall i \neq k
        }}
    \sum_{x_k}
    \min\left\{
        \sum_{y_k: (x_k, y_k) \in E} \left\llbracket
            \asg^{(\ttt-1)} \text{\footnotesize{ satisfies }} (x_k, y_k)
        \right\rrbracket,
        \sum_{y_k: (x_k, y_k) \in E} \left\llbracket
            \asg^{(\ttt)} \text{\footnotesize{ satisfies }} (x_k, y_k)
        \right\rrbracket
    \right\} \\
    & = \min\left\{
        \frac{1}{|E|}  \sum_{(x,y) \in E} \left\llbracket
            \asg^{(\ttt-1)} \text{\footnotesize{ satisfies }} (x,y)
        \right\rrbracket,
        \frac{1}{|E|}  \sum_{(x,y) \in E} \left\llbracket
            \asg^{(\ttt)} \text{\footnotesize{ satisfies }} (x,y)
        \right\rrbracket
    \right\} \\
    & = \min\Bigl\{\val_G(\asg^{(\ttt-1)}), \val_G(\asg^{(\ttt)})\Bigr\},
\end{aligned}
\end{align}
which implies
$\val_{G^{\otimes \rho}}(\sqasg^{(k,\ttt)}) \geq \min\{\val_G(\asg^{(\ttt-1)}), \val_G(\asg^{(\ttt)})\}$.
Concatenating all reconfiguration sequences $\sqasg^{(k,\ttt)}$ for $\ttt \in [\TTT]$,
we obtain a reconfiguration sequence from
$\asg_\ini^{\otimes k} \otimes \asg_\tar^{\otimes \rho-k}$
to 
$\asg_\ini^{\otimes k-1} \otimes \asg_\tar^{\otimes \rho-k+1}$,
deriving that

\begin{align}
\begin{aligned}
    & \val_{G^{\otimes \rho}}(
        \asg_\ini^{\otimes k} \otimes \asg_\tar^{\otimes \rho-k} \reco
        \asg_\ini^{\otimes k-1} \otimes \asg_\tar^{\otimes \rho-k+1}
    ) \\
    & \geq
    \min_{1 \leq \ttt \leq \TTT}
    \val_{G^{\otimes \rho}}(
        \asg_\ini^{\otimes k-1} \otimes \asg^{(\ttt-1)} \otimes \asg_\tar^{\otimes \rho-k}
        \reco
        \asg_\ini^{\otimes k-1} \otimes \asg^{(\ttt)} \otimes \asg_\tar^{\otimes \rho-k}
    )
    \\
    & \geq
    \min_{1 \leq \ttt \leq \TTT}
    \val_{G^{\otimes \rho}}(\sqasg^{(k,\ttt)})
    \\
    & \geq
    \min_{0 \leq \ttt \leq \TTT} \val_G(\asg^{(\ttt)})
    \\
    & = \val_G(\sqasg) = \val_G(\asg_\ini \reco \asg_\tar).
\end{aligned}
\end{align}
Accordingly, we obtain
\begin{align}
    \val_{G^{\otimes \rho}}(
        \asg_\ini^{\otimes\rho} \reco \asg_\tar^{\otimes\rho}
    )
    \geq \min_{1 \leq k \leq \rho}
    \val_{G^{\otimes \rho}}(
        \asg_\ini^{\otimes k} \otimes \asg_\tar^{\otimes \rho-k} \reco
        \asg_\ini^{\otimes k-1} \otimes \asg_\tar^{\otimes \rho-k+1}
    )
    \geq \val_G(\asg_\ini \reco \asg_\tar),
\end{align}
completing the proof.
\end{proof}

\section{Polynomial-time Decidability for Projection Games}
\label{sec:proj}

We finally present a polynomial-time algorithm that decides
reconfigurability between a pair of satisfying labelings for a \emph{projection game}.

\begin{theorem}
\label{thm:proj-poly}
    For a satisfiable projection game $G = (X,Y,E,\Sigma,\Pi)$ and 
    its two satisfying labelings $\asg_\ini$ and $\asg_\tar$,
    we can decide in polynomial time if there is a 
    reconfiguration sequence from $\asg_\ini$ to $\asg_\tar$
    consisting of satisfying labelings for $G$.
\end{theorem}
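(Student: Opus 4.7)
The plan is to reduce the decision problem to a simple structural characterization that can be checked in linear time.

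First I would establish a rigidity lemma: in any satisfying labeling $\psi$ of a projection game $G=(X,Y,E,\Sigma,\Pi)$, and for any non-isolated vertex $x \in X$ with some neighbor $y$, the value of $\psi(x)$ is forced to equal $\pi_{(x,y)}(\psi(y))$. Therefore, any single reconfiguration step that attempts to change $\psi(x)$ to some $\alpha' \neq \psi(x)$ would require $\pi_{(x,y)}(\psi(y)) = \alpha'$ in the post-step labeling; but the value of $y$ has not changed (steps change exactly one vertex), and by the projection property the preimage of $\psi(y)$ at $x$ is unique, giving $\alpha' = \pi_{(x,y)}(\psi(y)) = \psi(x)$, a contradiction. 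Iterating this across the sequence shows that $\psi(x)$ is invariant along any valid reconfiguration sequence for every non-isolated $x \in X$. This yields the necessary condition $\psi_\sss(x) = \psi_\ttt(x)$ for all non-isolated $x \in X$.

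Next I would prove sufficiency by an explicit construction. Assuming the necessary condition holds, I build the reconfiguration sequence in two phases: (i) for each isolated $x \in X$ with $\psi_\sss(x) \neq \psi_\ttt(x)$, flip its value to $\psi_\ttt(x)$ in one step, which is safe because $x$ has no incident constraints; and (ii) for each $y \in Y$ with $\psi_\sss(y) \neq \psi_\ttt(y)$, flip its value to $\psi_\ttt(y)$ in one step, processed in arbitrary order. The verification for phase (ii) is the key step: for every neighbor $x$ of $y$, the current value of $x$ equals $\psi_\sss(x) = \psi_\ttt(x)$ by the assumption, and $(\psi_\ttt(x), \psi_\ttt(y)) \in \pi_{(x,y)}$ since $\psi_\ttt$ is a satisfying labeling, so the constraint at $(x,y)$ remains satisfied after the flip. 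Constraints not touching $y$ are trivially unaffected. An analogous check covers isolated $y$'s (no constraints).

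Combining both directions, reconfigurability between $\psi_\sss$ and $\psi_\ttt$ is equivalent to their agreement on all non-isolated vertices of $X$, which is obviously testable in $O(|V|+|E|)$ time. The only conceptual obstacle is the rigidity lemma, but it follows directly from the projection property; once that is in hand the rest is routine. The same idea could be formulated symmetrically if the projection were oriented the other way, since the characterization only depends on which side of the bipartition has its values forced.
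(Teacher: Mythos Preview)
Your proposal is correct and follows essentially the same approach as the paper: both characterize reconfigurability as agreement of $\psi_\sss$ and $\psi_\ttt$ on the $X$-side, prove necessity from the uniqueness in the projection property, and prove sufficiency by flipping the $Y$-vertices one at a time. The only cosmetic difference is that the paper handles isolated $X$-vertices by first reducing to connected components (where, assuming both sides are nonempty, no vertex is isolated), whereas you treat them explicitly by restricting the agreement condition to non-isolated $x\in X$ and flipping isolated vertices freely.
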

\begin{proof}
Since each connected component of $G$ can be processed independently,
we only need to consider the case that $G$ is connected.
If $G$ consists of a single vertex (i.e., at least $X$ or $Y$ is empty),
the answer is always ``reconfigurable'' as it has no edges;
thus we can safely assume that $X \neq \emptyset$ and $Y \neq \emptyset$.
We show that $\asg_\ini$ and $\asg_\tar$ are reconfigurable to each other
if and only if
$\asg_\ini$ and $\asg_\tar$ agree on $X$.

We first show the ``only-if'' direction.
Suppose $\asg_\ini(x) \neq \asg_\tar(x)$ for some $x \in X$.
Starting from $\asg_\ini$,
changing the label of $x$ from $\asg_\ini(x)$ to any other label
must violate edges $e = (x,y)$ incident to $x$
since the label of $y$ should be mapped to $\asg_\ini(x)$ via $\pi_e$, as claimed.

We then show the ``if'' direction.
Suppose $\asg_\ini(x) = \asg_\tar(x)$ for every $x \in X$.
Then, it holds that
$\pi_e(\asg_\ini(y)) = \asg_\ini(x) = \asg_\tar(x) = \pi_e(\asg_\tar(y))$
for every edge $e=(x,y) \in E$.
We can thus obtain a reconfiguration sequence from $\asg_\ini$ to $\asg_\tar$ without breaking any constraint,
by changing the label of each vertex $y$ of $Y$
from $\asg_\ini(y)$ to $\asg_\tar(y)$ one by one, as desired. 
\end{proof}

\paragraph{Acknowledgments.}
I wish to thank the anonymous referees for their helpful suggestions.

\printbibliography

\end{document}